\documentclass{IEEEtran4PSCC}

\ifCLASSINFOpdf
  \usepackage[pdftex]{graphicx}
\else
\fi

\usepackage{xcolor}
\usepackage[cmex10]{amsmath}
\usepackage{amsthm}
\usepackage{physics}

\newtheorem{theorem}{Theorem}[section]

\newtheorem*{remark}{Remark}

\hyphenation{op-tical net-works semi-conduc-tor}

\makeatletter
\let\old@ps@headings\ps@headings
\let\old@ps@IEEEtitlepagestyle\ps@IEEEtitlepagestyle
\def\psccfooter#1{%
    \def\ps@headings{%
        \old@ps@headings%
        \def\@oddfoot{\strut\hfill#1\hfill\strut}%
        \def\@evenfoot{\strut\hfill#1\hfill\strut}%
    }%
    \def\ps@IEEEtitlepagestyle{%
        \old@ps@IEEEtitlepagestyle%
        \def\@oddfoot{\strut\hfill#1\hfill\strut}%
        \def\@evenfoot{\strut\hfill#1\hfill\strut}%
    }%
    \ps@headings%
}
\makeatother

% \psccfooter{%
%         \parbox{\textwidth}{\hrulefill \\ \small{21st Power Systems Computation Conference} \hfill \begin{minipage}{0.2\textwidth}\centering \vspace*{4pt} \includegraphics[scale=0.06]{PSCC_logo.png}\\\small{PSCC 2020} \end{minipage} \hfill \small{Porto, Portugal --- June 29 -- July 3, 2020}}%
% }

\begin{document}

\title{Identification of Critical Clusters in Inverter-based Microgrids}
%\title{Identifying Critical Clusters in Microgrids}

\author{
\IEEEauthorblockN{Andrey Gorbunov\\ Jimmy Chih-Hsien Peng}
\IEEEauthorblockA{The Department of Electrical \& Computer Engineering \\
The National University of Singapore\\
Singapore\\
gorbunov@u.nus.edu, jpeng@nus.edu.sg}
\and
\IEEEauthorblockN{Petr Vorobev}
\IEEEauthorblockA{Center for Energy Science and Technology \\
Skolkovo Institute of Science of Technology\\
Moscow, Russia\\
p.vorobev@skoltech.ru}
}

\maketitle

\begin{abstract}
In this paper, we investigate the stability properties of inverter-based microgrids by establishing the possible presence of the so-called critical clusters - groups of inverters with their control settings being close to the stability boundary. For this, we consider the spectrum of the weighted admittance matrix of the network and show that its distinct eigenvalues correspond to inverter clusters, whose structure can be revealed by the corresponding eigenvector. We show that the maximum eigenvalue of the weighted admittance matrix corresponds to the cluster, closest to stability boundary. We also establish, that there exists a boundary on the value of this eigenvalue, that corresponds to the stability of the overall system. Thus, we make it possible to certify the stability of the system and find the groups of inverters which control settings are closest to the stability boundary.
\end{abstract}

\begin{IEEEkeywords}
inverter-based microgrids, droop controlled inverters, small signal stability.
\end{IEEEkeywords}

%----------------------------------------------------%
%               I N T R O D U C T I O N
%----------------------------------------------------%      
\section{Introduction}

Grid-forming inverters are thought to be the core technology for inverter-based microgrids, allowing them to operate in stand-alone modes without being connected to the main power grid. For a microgrid to be secure with respect to a sudden loss of any single inverter, it is required that more than one of its inverters is continuously operating in the grid forming mode. Ideally, it is best to have all inverters (those that are dispatchable) operating in such grid-forming mode as this will maximize the reliability of the microgrid to provide uninterrupted services to consumers. On the other hand, the parallel operation of grid-forming inverters is not possible if they all attempt to keep certain constant frequency, so special control systems are needed to ensure their stable operation. 

Droop-controlled inverters \cite{Pogaku2007} are designed to mimic the dynamic behavior of synchronous generators by intentionally adjusting their output frequency in response to change in real power output. It can be shown, that from the point of the power system, such an inverter is fully equivalent to a synchronous machine, and dynamic equations for inverter frequency is similar to swing equation for the machine. However, experimental results \cite{Barklund2008Energy} showed, that droop-controlled inverters, working in parallel, are prone to instabilities and the allowed region for values of droop coefficients can be quite restricted. Further analysis of such systems revealed that conventional approaches for analysing dynamics of power systems based on timescales separation, where the slower modes associated with power controllers can be considered separately from the fast network dynamics fail to perform well for inverter-based microgrids \cite{mariani2015model}. It was shown, that the fast electro-magnetic dynamics of the network can not be neglected when analysing the small-signal stability for power controllers, thus making the dynamic model of the system very complex \cite{Nikolakakos2016}, since all the line currents have to be modeled as dynamic states.

Recently, there was an extensive research activity dedicated to model order reduction techniques suitable for microgrids. The main questions that were targeted are simplified models efficient for subsequent numerical analysis \cite{rasheduzzaman2015reduced}, identification of the degrees of freedom to exclude \cite{Nikolakakos2016}, analytic models with further instability analysis \cite{Vorobev2016}. It was established that the instabilities in inverter-based microgrids have extraordinary nature and do not have an analogy in large-scale power systems. In particular, it was found that shorter network lines and more significant values of droop coefficients of the inverters tend to promote instabilities. In \cite{Nikolakakos2016} the term "critical clusters" was used to refer to a group of adjacent inverters that are tightly connected and make the dominant contribution to the unstable mode. It is thus essential, to identify these critical clusters since it is the parameters of this group that need to be modified to restore the system's stability or enhance its stability margin. Identification of critical clusters can be challenging (even if the full-scale direct numerical stability analysis is performed) since their proximity to the instability onset depends on both the network parameters and the inverter control settings. In particular, it is not the most tightly connected cluster that is critical, but rather the one with the unfortunate combination of line parameters and droop values.

In the present manuscript, we develop a method for generalization of the critical clusters concept: we find an equivalent representation for a microgrid as a set of clusters, that are ranked according to their "criticality." Each of the clusters appears to be equivalent to an inverter-infinite bus system with some effective parameters, that makes the stability analysis straightforward. Such a representation becomes possible by analysing the system susceptance matrix, which is first multiplied by a matrix of inverter droop coefficients. We show, that every eigenvalue of such a "weighted" susceptance matrix correspond to one cluster, and clusters can be naturally arranged in the order of their "criticality" according to the corresponding eigenvalues. After the clusters are identified, we can immediately determine whether the system is stable and specify the parameters that need to be changed to stabilize the system or enhance its stability. 

The paper proceeds as follows. Section \ref{sec:model} introduces the dynamic model for inverter-based microgrid and derives its representation in the state-space form. Section \ref{sec:eigen} presents the core result of the paper - connection between the spectrum of the system weighted graph Laplacian matrix and small-signal stability, and explains how critical clusters can be identified using this spectrum. We demonstrate the stability assessment and stability enhancement method for a test system in Section \ref{sec:numerical}. Concluding remarks are given in Section \ref{sec:conclusion}.

%----------------------------------------------------%
%               S E C T I O N  II
%----------------------------------------------------%      

\section{Dynamics of Droop-Controlled Inverters}\label{sec:model}

Let us consider a microgrid composed of a number of droop-controlled inverters, connected with lines. Such a grid can be thought of as a graph with the set of edges $\mathcal{E}$ corresponding to lines, and the set of nodes - $\mathcal{V}$, corresponding to inverters. We assume that the system is operating in a certain steady-state with an ac frequency $\omega_0$. For the small-signal stability studies it is then convenient to switch to the so-called dynamic phasor domain, where each AC bus voltage $v_i(t)$ and line current $i^{ik}(t)$ are represented in the following way \cite{Vorobev2016, Vorobev2017}:
\begin{equation}\label{VIrep}
v_i(t)=Re[U_i(t)e^{j\omega_0 t}];\quad i^{ik}(t)=Re[I^{ik}(t)e^{j\omega_0 t}]
\end{equation}
Both $U_i(t)$ and $I^{ik}(t)$ are the mentioned dynamic phasors, which can be arbitrary functions of time, not necessarily slowly varying. It is further convenient to represent them as phasors with $d$ and $q$ components:
\begin{subequations}
\begin{align}\label{VIdq}
U_i(t)&=U_{\mathrm{d},i}+jU_{\mathrm{q},i}=V_i(t) e^{j\theta_i(t)}
\\I^{ik}(t)&=I^{ik}_{\mathrm{d}}+jI^{ik}_{\mathrm{q},}
\end{align}
\end{subequations}
Using this representation, dynamics of inverter-based microgrid \cite{guo2014dynamic,mariani2015model} for small-signal stability studies can be described by the following set of linearized equations (for detailed discussion of the linearization procedure see \cite{Vorobev2016}):
\begin{subequations} \label{eq:dyn_model}
    \begin{align}
        &\dot{\theta_i} = \omega_i \label{thetaeq}
        \\
        &\tau \dot{\omega_i} = - \omega_i - m_i P_i \label{omegaeq} 
        \\
        &\tau \dot{V_i} = - V_i - n_i Q_i \label{Veq}
        \\
        &L^{ij} \dot{I^{ij}_d} = V^{i}- V^{j} - R^{ij} I^{ij}_d + \omega_0 L^{ij} I^{ij}_q \label{ieqD}
        \\
        &L^{ij} \dot{I^{ij}_q} = \theta^{i} - \theta^{j} - R^{ij} I^{ij}_q - \omega_0 L^{ij} I^{ij}_d, \label{ieqQ} \end{align}
\end{subequations}
where all the variables with the subscript $i$ refer to inverter at bus $i$ ($i \in \mathcal{V}$) and all the variables with the superscripts $ik$ refer to line between buses $i$ and $k$ ($(ij) \in \mathcal{E}$). Thus, $V_i$, $\theta_i$, and $\omega_i$ are the (small-signal variations) of inverter $i$ voltage, phase, and frequency respectively, $P_i$ and $Q_i$ are the instantaneous real and reactive power discharged by the inverter (again, small-signal variations), and $m_i$, $n_i$ are frequency and voltage droop coefficients. $I_{\mathrm{d}^{ik}}$ and $I_{\mathrm{q}^{ik}}$ are $d-$ and $q-$ components of the current in line $ik$, and $L^{ik}$ and $R^{ik}$ are its inductance and resistance respectively. Parameter $\tau$ is the inverse of the power controller low-pass filter cut-off frequency ($\tau=1/\omega_c$), for simplicity we assume it to be the same for every inverter.

We emphasize, that according to numerous studies \cite{guo2014dynamic, mariani2015model, Nikolakakos2016, Vorobev2016}, the fast electromagnetic dynamics, represented by equations \eqref{ieqD} and \eqref{ieqQ} \emph{can not be neglected} even when studying the stability of much slower power controller modes. Therefore, the total number of equations, that comprise the system dynamic model is $3\mathcal{M}+2l$, where $\mathcal{M}$ - is the total number of inverters and $l$ is the total number of lines. In addition to equations \eqref{eq:dyn_model}, Kirchhoff's current law should be written for every virtual node in the system.   

In order to perform the small-signal stability of the system \eqref{eq:dyn_model} we first introduce its dynamic admittance matrix in the Laplace domain:
\begin{equation}
        \hat{Y}_{ij}(s) = 
        \begin{cases}
            \sum_{k \in \mathcal{V}, \ k\neq i} y^{ik}(s), \ i=j\\
            -y^{ij} (s), \ i\neq j
        \end{cases}
\end{equation}
Next, we assume, that all the lines in the microgrid are of the same type, i.e. have the same $R/X$ ratio: $\frac{R^{ij}}{X^{ij}} = \rho$ for every $ij$. Then, the dynamic admittance matrix is proportional to the static susceptance matix $\hat{B} = - \Im(\hat{Y}(0))$, that is:
\begin{equation}
    \hat{Y}(s) = \frac{\rho^2 + 1}{\rho + j + \frac{s}{\omega_0}} \hat{B}.
\end{equation} 
Therefore, the Kron reduction of the dynamic admittance matrix is given by $Y(s) = \frac{\rho^2 + 1}{\rho + j + \frac{s}{\omega_0}} B$, where $B$ is the Kron reduced susceptance matrix. The transient admittance matrix $Y(s)$ is used to connect inverter bus voltages to inverter current injections in the linear approximation as:
\begin{equation}
(\pmb{I}_d + j \pmb{I}_q) = Y(s) (\pmb{V} + j \pmb{\theta})
\end{equation}
where $\pmb{V}$, $\pmb{\theta}$, $\pmb{I_d}$, and $\pmb{I_q}$ are the $\mathcal{M}$-dimentional vectors of inverter voltages, phases, $d-$ and $q-$ output currents respectively. We also use the vector of inverter frequencies $\pmb{\omega}$. 

Under the assumption of small voltage and phase difference between inverters (see \cite{Vorobev2016, mariani2015model}), the following relation can be written between inverter output active and reactive powers and the current components:  
\begin{equation}
    \begin{pmatrix}
        \mathbf{P} \\
        \mathbf{Q} 
    \end{pmatrix} = 
    \begin{bmatrix}
        I & 0\\
        0 & -I\\
    \end{bmatrix}
    \begin{pmatrix}
        \mathbf{I_d}\\
        \mathbf{I_q}
    \end{pmatrix}
\end{equation}
where $I$ is the $\mathcal{M}\times \mathcal{M}$ identity matrix. 

Finally, we obtain that following state-space representation for dynamics of the microgrid: 
\begin{equation}\label{eq:5th_model}
    \dot{\pmb{x}}=A\pmb{x}
\end{equation}
where the state vector $\pmb{x}=[\pmb{\theta}\,,\pmb{\omega}\,, \pmb{V}\,, \pmb{I_d}\,, \pmb{I_q}]^T$ is a $5m$-dimensional state vector of the system, and the state matrix $A$ is given by the following expression:
\begin{equation} \label{eq:statematrixA}
    A =
    \begin{bmatrix}
        0 & I & 0 & 0 & 0 \\
        0 & -\omega_c I & 0 & -\omega_c M & 0 \\
        0 & 0 & -\omega_c I & 0 & \omega_c N\\
        0 & 0 & \omega_0 B' & -\omega_0\rho I & \omega_0 I \\
        \omega_0 B' & 0 & 0 & -\omega_0 I & -\omega_0\rho I
    \end{bmatrix}
\end{equation}
and we made a short-cut denotation $B' = (1+\rho^2) B$.

%----------------------------------------------------%
%               S E C T I O N  III
%----------------------------------------------------%      

\section{Eigenmodes decomposition theory}\label{sec:eigen}

\begin{figure}
    \centering
    \includegraphics[width=0.5\textwidth]{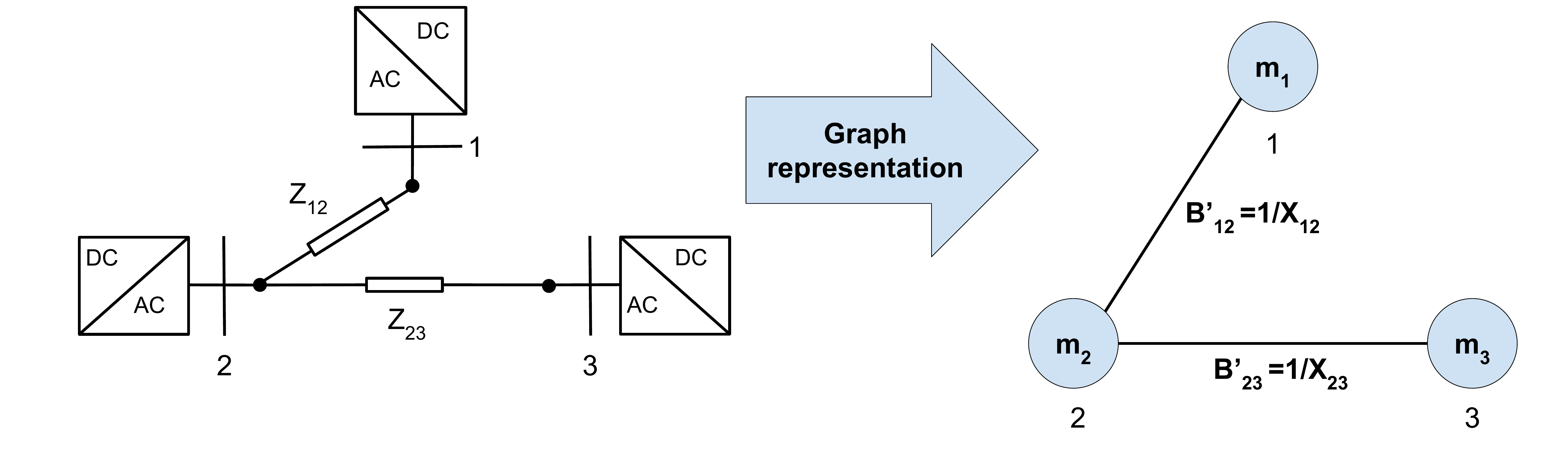}
    \caption{Graph representation of the inverter-based microgrid}
    \label{fig:graph_representation}
\end{figure}

In this section, we relate the spectrum of a network graph with the spectrum of the linearized model \eqref{eq:5th_model}. Let us introduce the following weighted Laplacian matrix $C$ of the network graph, which we will define as follows:  
\begin{equation}
    C = M B' \ ,
\end{equation}
where susceptance matrix, $B' = -(1+\rho^2)\Im(Y)$ was defined in the end of the previous section, and $M = \text{diag}(m_1, \cdots, m_l)$ is the diagonal matrix of the inverter frequency droop gains    . We also denote the eigenvalues of the $C$ matrix as $\mu_i$ and the corresponding eigenvectors as $\pmb{u}_i$:
\begin{equation} \label{eq:admittance_spectrum}
     C \pmb{u}_i = \mu_i \pmb{u}_i, \ i=1,\dots,l.
\end{equation}
Matrix $C$ could be considered as a 'generalized' Laplacian matrix for the network graph augmented by node weights equal to droop gains $m_i, \ i=1,\cdots,l$ as depicted in Fig. \ref{fig:graph_representation}. Technically, $C$ is not a Laplacian matrix but preserves some basic properties which are discussed below.

As an example, matrix $C$ for the system depicted in Fig. \ref{fig:graph_representation} has the following explicit form: 
\begin{equation} \label{eq:C_example}
    C = \begin{bmatrix} \frac{m_1}{X_{12}} & -\frac{m_1}{X_{12}} & 0 \\ -\frac{m_2}{X_{12}} & m_2(\frac{1}{X_{12}}+\frac{1}{X_{23}}) & - \frac{m_2}{X_{23}} \\ 0 & -\frac{m_3}{X_{23}} & \frac{m_3}{X_{23}}\end{bmatrix} \ .
\end{equation}
This example illustrates the relationship between $C$ and the network admittance matrix $Y$. Precisely, $C$ is the admittance matrix for the equivalent lossless network (setting $R=0$) multiplied by $M$.
One could notice from \eqref{eq:C_example} that, unlike the admittance matrix, $C$ is in general not symmetric, and the sum of the elements in each column is generally not zero. Therefore, $C$ loses some of the properties of the admittance (weighted Laplacian) matrix.

However, it is possible to make $C$ symmetric by a proper similarity transformation:
\begin{align} \label{eq:genC_example}
    &M^{-1/2} C M^{1/2} =\nonumber\\
    &\begin{bmatrix} \frac{m_1}{X_{12}} & -\frac{\sqrt{m_1m_2}}{X_{12}} & 0 \\ -\frac{\sqrt{m_1m_2}}{X_{12}} & m_2(\frac{1}{X_{12}}+\frac{1}{X_{23}}) & - \frac{\sqrt{m_2 m_3}}{X_{23}} \\ 0 & -\frac{\sqrt{m_2m_3}}{X_{23}} & \frac{m_3}{X_{23}}\end{bmatrix} \ ,
\end{align}
However, in this case, the sum of neither columns nor rows is zero, i.e., diagonal elements are not the sum of non-diagonal elements in each row and column. Although, $C$ is not exactly a Laplacian matrix, some properties could be inferred. For instance, $M^{-1/2} C M^{1/2}$ is positive semi-definite ($\pmb{x}^T M^{1/2} B' M^{1/2} \pmb{x}= (M^{1/2} \pmb{x})^T B' (M^{1/2} \pmb{x}) \geq 0$). Consequently, $\mu_i$ are real non-negative\footnote{For the connected graph (without isolated nodes) there is only one trivial $\mu_1 = 0$ with $\pmb{u}_1 = [1, \cdots, 1]^T$. It follows from the uniqueness of trivial eigenvalue for the Laplacian $B'$ \cite{chung1997spectral} and the min-max theorem for symmetrized $C$: $m_{\min} \lambda_i(B') \leq \mu_i \leq m_{\max} \lambda_i(B')$.}, and eigenvectors $\pmb{u'}$ of $M^{-1/2}CM^{1/2}$ could be chosen to be real. If $\pmb{u'}$ are real, then $\pmb{u}$ will be also real according to the similarity transformation $\pmb{u'} = M^{-1/2} \pmb{u}$. 

The following theorem, which is one of the main contributions of the present paper, establishes the connection between the spectrum of $C$ and the spectrum of dynamic system \eqref{eq:5th_model}.
\begin{theorem} \label{theorem:Id_repr}
    If $\rho = \frac{R}{X}$ ratios are the same across the system and droop gains are proportional, $M = k N, \ k > 0$, the eigenvalues $\lambda$ of the linearized system \eqref{eq:dyn_model} (given in \eqref{eq:5th_model}) are connected with the eigenvalues $\mu$ of $C$ as the follows,
    \begin{equation} \label{eq:sym_roots}
        k g^2(\lambda) (h^2(\lambda)+1) \lambda + g(\lambda) (k + \lambda) \mu + \mu^2 = 0
    \end{equation}
    where $g(\lambda) = (1 + \tau \lambda)$, $h(\lambda) = (\rho + \frac{1}{\omega_0} \lambda)$.
    
    Besides, the eigenvector $\pmb{u}$ of $C$ coincides with the \eqref{eq:5th_model} eigenvector part corresponding to $\pmb{\theta}$. 
\end{theorem}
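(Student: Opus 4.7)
The plan is to write the block eigenvalue problem $A\pmb{x}=\lambda\pmb{x}$ using the partition $\pmb{x}=[\pmb{\theta},\pmb{\omega},\pmb{V},\pmb{I_d},\pmb{I_q}]^T$ and successively eliminate $\pmb{\omega}$, $\pmb{I_d}$ and $\pmb{I_q}$, so as to obtain coupled relations in $\pmb{\theta}$ and $\pmb{V}$ that involve only the matrix $C=MB'$. I then diagonalize in the eigenbasis of $C$; the resulting scalar $2\times 2$ compatibility condition should reproduce \eqref{eq:sym_roots}.

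Reading off the first three block rows of $A$ in \eqref{eq:statematrixA}, and using $\tau=1/\omega_c$, one gets the trio $\pmb{\omega}=\lambda\pmb{\theta}$, $M\pmb{I_d}=-\lambda g(\lambda)\pmb{\theta}$, and $N\pmb{I_q}=g(\lambda)\pmb{V}$. The remaining two block rows, divided by $\omega_0$, give the algebraic relations $h(\lambda)\pmb{I_d}=B'\pmb{V}+\pmb{I_q}$ and $h(\lambda)\pmb{I_q}=B'\pmb{\theta}-\pmb{I_d}$, which solve simultaneously to $(1+h^2(\lambda))\pmb{I_d}=B'\pmb{\theta}+h(\lambda)B'\pmb{V}$ and $(1+h^2(\lambda))\pmb{I_q}=h(\lambda)B'\pmb{\theta}-B'\pmb{V}$. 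Multiplying the first by $M$ and the second by $N=M/k$, and invoking $MB'=C$, collapses everything to the purely $(\pmb{\theta},\pmb{V})$ system
\begin{align*}
-\lambda g(\lambda)(1+h^2(\lambda))\,\pmb{\theta} &= C\pmb{\theta}+h(\lambda)\,C\pmb{V},\\
k\,g(\lambda)(1+h^2(\lambda))\,\pmb{V} &= h(\lambda)\,C\pmb{\theta}-C\pmb{V}.
\end{align*}

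At this point I invoke the remark preceding the theorem: $M^{-1/2}CM^{1/2}=M^{1/2}B'M^{1/2}$ is real symmetric and positive semidefinite, so $C$ is diagonalizable with a full real eigenbasis $\{\pmb{u}_i\}$ and nonnegative eigenvalues $\mu_i$. Expanding $\pmb{\theta}=\sum_i\alpha_i\pmb{u}_i$ and $\pmb{V}=\sum_i\beta_i\pmb{u}_i$ decouples the coupled matrix system above into an independent $2\times 2$ linear system in $(\alpha_i,\beta_i)$ for each $i$; the determinantal compatibility condition $(\lambda g(1+h^2)+\mu_i)(kg(1+h^2)+\mu_i)+h^2\mu_i^2=0$, once expanded and divided by the common factor $1+h^2(\lambda)$, is exactly \eqref{eq:sym_roots}. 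Because any nontrivial joint solution $(\alpha_i,\beta_i)$ is supported inside a single $\mu_i$-eigenspace, the $\pmb{\theta}$-component of the full state eigenvector of $A$ lies in the $\mu$-eigenspace of $C$, which is the eigenvector claim.

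The main obstacle I anticipate is bookkeeping rather than conceptual: one has to thread the factors $g(\lambda)$ and $h(\lambda)$ through the elimination without error, and apply the proportionality $M=kN$ at precisely the step that converts $NB'$ into $C/k$, so that the \emph{same} matrix $C$ governs both reduced equations. Without this hypothesis one would obtain two distinct weighted susceptance matrices with generally non-coinciding eigenbases, and the clean one-to-one correspondence between $\mu$ and $\lambda$ would fail; this is also what makes the hypothesis sharp rather than merely convenient.
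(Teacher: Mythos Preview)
Your proposal is correct and follows essentially the same route as the paper's proof: block-eliminate the current and frequency variables from $A\pmb{x}=\lambda\pmb{x}$ using the uniform-$\rho$ and $M=kN$ hypotheses so that only $C=MB'$ survives, then diagonalize in the eigenbasis of $C$. The only cosmetic difference is that the paper pushes the elimination one step further, removing $\pmb{V}$ as well to obtain the single matrix polynomial $\bigl[k g^2(\lambda)(h^2(\lambda)+1)\lambda\,I + g(\lambda)(k+\lambda)C + C^2\bigr]\pmb{\theta}=0$ and then substitutes $\pmb{\theta}=\pmb{u}$, whereas you stop at the coupled $(\pmb{\theta},\pmb{V})$ system and take the $2\times 2$ determinant; expanding your determinant and dividing by $1+h^2(\lambda)$ recovers exactly the paper's polynomial, so the two arguments are equivalent.
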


\begin{proof}

If $R/X$ ratio is the same across the system and frequency and voltage droop gains ratio is the same for every inverter, i.e., $M = k N$, then the linearized system \eqref{eq:5th_model} is equivalent to the following matrix polynomial in the Laplace domain:
\begin{equation} \label{eq:theta_repr}
    [k g^2(s) (h^2(s)+1) s I + g(s) (k + s) M B' + (M B')^2] \pmb{\theta} = 0
\end{equation}
One could verify that by expressing \eqref{eq:5th_model} in terms of $\pmb{\theta}$ vector. Further, one could verify that $\pmb{u}$ is, indeed, the eigenvector for \eqref{eq:theta_repr}.
\end{proof}

\begin{remark}
Each $\mu$ corresponds to five eigenvalues $\lambda$ that are the roots of \eqref{eq:sym_roots}. For example, all $\lambda$ of the two-bus system depicted in Fig. \ref{fig:two-bus_eig} are the roots of \eqref{eq:sym_roots} with one particular $\mu = \frac{m}{X}$. Hereby, the system \eqref{eq:5th_model} of $l$ inverters decouples into $l$ separate clusters each corresponding to one $\mu_i, \ i=1,\cdots,l$.
\end{remark}

The fact that $\pmb{u}$ are real means phase shifts between any two elements of a mode shape of \eqref{eq:5th_model} associated with $\pmb{\theta}$ (that is also $\pmb{u}$ according Theorem \ref{theorem:Id_repr}) are either $0^\circ$ or $180^\circ$. Therefore, one group of inverters oscillates in phase while another in the opposing phase. These two groups are called \textit{coherent} groups in conventional power systems study \cite{chow2013power}. 

Here, we demonstrate that there exists a certain threshold value of $\mu_{cr}$, that can be used to assess the stability of the system.  In fact, if any of the $\mu_l$ for a given system is greater than $\mu_{cr}$, then the system is unstable, and vice versa. Therefore, we can use the value of the highest $\mu_l$ for the system to assess its stability. The characteristic equation \eqref{eq:sym_roots} depends on a positive $\mu$ that could be treated as a parameter for the root locus. The root locus for dominant eigenvalue $\lambda$ associated with droop control dynamics is depicted in Fig. \ref{fig:root_locus} with typical parameters: $k=1, \rho = 1.4$, and $\mu$ varying in the range 60-1000. The root locus shows that $\Re(\lambda)$ monotonically depends on $\mu$. Therefore, for each set $k, \rho$, there is a unique upper bound $\mu_{cr}$ that we calculate in the following section. Further, it could be inferred that the roots of \eqref{eq:sym_roots} are stable if and only if the highest $\mu_l = \max_{\pmb{x}\neq0}\frac{\pmb{x}^\dagger C \pmb{x}}{\pmb{x}^\dagger\pmb{x}} < \mu_{cr}$. Therefore, $\mu$ is a parameterization of the network topology augmented with droop gains $M$. In other words, $\mu$ concentrates the information on corresponding cluster connectedness.  

\begin{figure}
    \centering
    \includegraphics[width=0.4\textwidth]{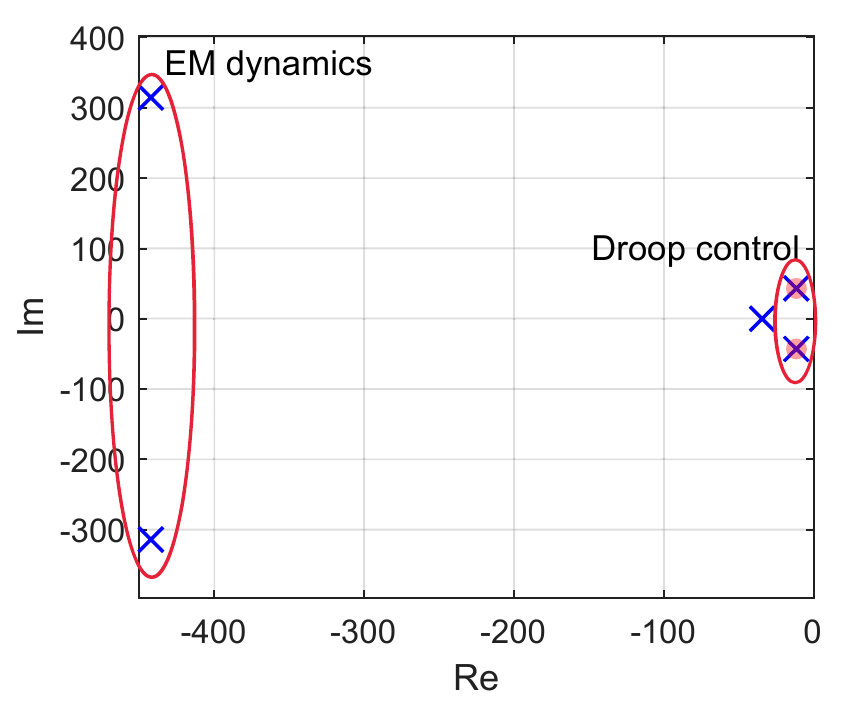}
    \caption{An eigenplot for the two-bus system}
    \label{fig:two-bus_eig}
\end{figure}

In addition, the following important Theorem holds: 
\begin{theorem}\label{theorem_addition}
    The addition of any new line or the increase of $B_e' = \frac{1}{X_e}$ (susceptance) for any existing line in the system increases eigenvalues $\mu_i, \ i=1,\cdots, l$.
\end{theorem}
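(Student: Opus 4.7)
The plan is to reduce the theorem to Weyl's eigenvalue monotonicity principle applied to the symmetrized operator $\tilde{C} := M^{1/2} B' M^{1/2}$, which by the similarity displayed in (11) shares its spectrum with $C = MB'$. Since $M$ is held fixed under the network modifications considered, it suffices to show that any admissible change $B' \mapsto B' + \Delta B'$ produces a positive semi-definite perturbation $\Delta B' \succeq 0$: then $M^{1/2} \Delta B' M^{1/2} \succeq 0$, and Weyl's theorem immediately yields $\mu_i(\tilde{C}^{\text{new}}) \ge \mu_i(\tilde{C})$ for every sorted index $i$, which is exactly the claim.

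In the easy case where the affected line joins two inverter buses, so no virtual node is involved, the perturbation is the rank-one matrix $\Delta B' = \Delta b\,(e_i - e_j)(e_i - e_j)^T$, with $\Delta b > 0$ either the susceptance of the new line or the susceptance increment of an existing one. This is manifestly PSD and the argument is complete.

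The subtler case is when the modified line touches a virtual (load) bus that the Kron reduction has eliminated. Here I would invoke the Schur-complement variational identity
\begin{equation*}
    x^T B' x = \min_{y} \begin{pmatrix} x \\ y \end{pmatrix}^T B'_{\text{full}} \begin{pmatrix} x \\ y \end{pmatrix},
\end{equation*}
with $B'_{\text{full}}$ the un-reduced susceptance Laplacian and the minimizer given by the harmonic extension of $x$ to the virtual buses. At the un-reduced level the perturbation is the same rank-one PSD matrix as above, so the inner quadratic form is pointwise non-decreasing; taking the minimum over $y$ on each side preserves the inequality, giving $x^T B'^{\text{new}} x \ge x^T B' x$ for every $x$, i.e.\ $\Delta B' \succeq 0$ at the Kron-reduced level as well.

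The main obstacle is precisely this last step --- propagating PSD monotonicity through the Kron reduction --- which is the classical PSD monotonicity of Schur complements of Laplacians; once it is in hand, the rest is bookkeeping. One caveat worth mentioning in the write-up is that the trivial eigenvalue $\mu_1 = 0$ persists because the constant vector remains in the kernel of any connected-graph Laplacian, so ``increases'' must be read as ``does not decrease'', with strict increase occurring at every $\mu_i$ whose eigenvector is not constant on the part of the graph affected by the new or strengthened edge.
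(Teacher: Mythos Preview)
Your approach is essentially the paper's: both arguments recognize that adding or strengthening a line contributes a rank-one positive semi-definite term to the Laplacian and then invoke Weyl's inequality to conclude eigenvalue monotonicity. The paper writes $\hat{C}=C+C_e$ with $C_e$ the single-edge ``generalized Laplacian'' and applies Weyl directly, glossing over the fact that $C$ and $C_e$ are not symmetric; your passage to the symmetrized operator $\tilde{C}=M^{1/2}B'M^{1/2}$ is the technically correct way to do this and is exactly the similarity the paper sets up in its equation~(11) but never actually uses in the proof. The paper also does not address the case where the modified edge is incident to a virtual bus removed by Kron reduction; your Schur-complement variational identity handles this cleanly and is a genuine addition. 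Finally, your caveat that the conclusion should be read as ``does not decrease'' (with the trivial eigenvalue $\mu_1=0$ unchanged) is correct and absent from the paper's statement.
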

\begin{proof}
 To check this fact for the line addition, let us consider line connection $e = (1,2)$ without loss of generality. The new $\hat{C} = C + C_e$ is decomposed into sum of the original $C$ for the graph without added line and $C_e$ is the 'generalized' Laplacian of the graph on $l$ vertices consisting of just the edge $e = (1,2)$ with $X_{12}$,

\begin{equation} \label{eq:Laplacian_edge}
    C_e = \frac{1}{X_{12}}\begin{pmatrix}m_1 & -m_1 & 0 & \cdots & 0 \\ -m_2 & m_2 & 0 & \cdots & 0 \\ 0 & 0 & 0 & \cdots & 0  \\ \vdots & \vdots  & \vdots  & \ddots & \vdots \\ 0 & 0 & 0 & \cdots & 0
    \end{pmatrix} \ .
\end{equation}
Then one could use Weyl's inequality for eigenvalues of the sum $\hat{C}$ of matrices $C$ and $C_e$ \cite{gantmakher1959theory}:
\begin{equation}
    \mu_i \leq \hat{\mu_i} \leq \mu_i + \frac{m_1 + m_2}{X_{12}} \ , i=1,\cdots,l \ ,
\end{equation}
where $\hat{\mu_i}$ are eignevalues for $\hat{C}$. The given argument could be applied to any line $e = (i,j)$ addition.
Also, the change of the existing line $e = (i,j)$ susceptance by $\Delta X_e$ gives analogous to \eqref{eq:Laplacian_edge} $C_e$. Hence, the argument extends to the change of the existing line parameters.  
\end{proof}

The result of the Theorem \ref{theorem_addition} suggests that adding a new line to the microgrid, as well as reducing the impedance values of existing lines, can only reduce the stability margin. This result is entirely consistent with the previous findings of \cite{Vorobev2016} and \cite{Vorobev2017} and is specific for microgrids having no analogy in large-scale power systems. 

\begin{figure}
    \centering
    \includegraphics[width = 0.4\textwidth]{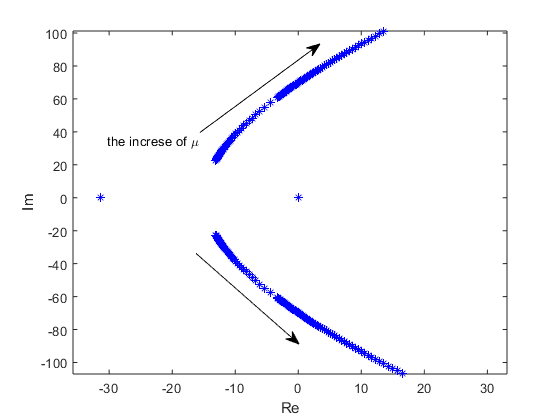}
    \caption{Root locus for the two-bus system}
    \label{fig:root_locus}
\end{figure}

\subsection{Procedure for identifying critical clusters}

Let us now explain how we can use the spectrum of the 'generalized' Laplacian matrix $C$ to assess the stability of a microgrid. (We start from noticing, that both equations \eqref{eq:sym_roots} and \eqref{eq:theta_repr} only depend on the system $R/X$ ratio $\rho$ and droop gains ratio $k$ and do not depend on the system topology and line lengths.) The Therefore, the value of $\mu_{cr}$, which can be calculated from any of these equations, also is independent of the system topology and line lengths. 

Therefore, the procedure of the stability assessment is the following. We first determine $\mu_{cr}$ from \eqref{eq:sym_roots} provided the values of $\rho$ and $k$ are given for our system. Next, we find the actual values of $\mu_i$ - the spectrum of the matrix $C$. If none of the $\mu_i$ is greater than $\mu_{cr}$, then the system is stable. If one or more of $\mu_i$ is greater than $\mu_{cr}$, then the system is unstable, and the corresponding eigenvectors $\pmb{u}_i$ gives the structure for the corresponding critical clusters. The high magnitudes of $\pmb{u}_i$ correspond to critical droop or line parameters. The whole procedure is described in more detail by the flow-chart in Fig. \ref{fig:critical_cluster_flowchart}.      
The results of the above-described procedure are the list of two-bus equivalent systems, arranged in the order of decreasing $\mu$. Those, corresponding to higher $\mu$'s will have the most influence on the system stability, and it is the parameters of the inverters and lines in these clusters that should be changed to stabilize the system or increase its stability margin. Figure \ref{fig:kundur_mmg} illustrates the two-bus equivalent clusters on the example of a $4$-bus Kundur system.

\begin{figure}
    \centering
    \includegraphics[width=0.45\textwidth]{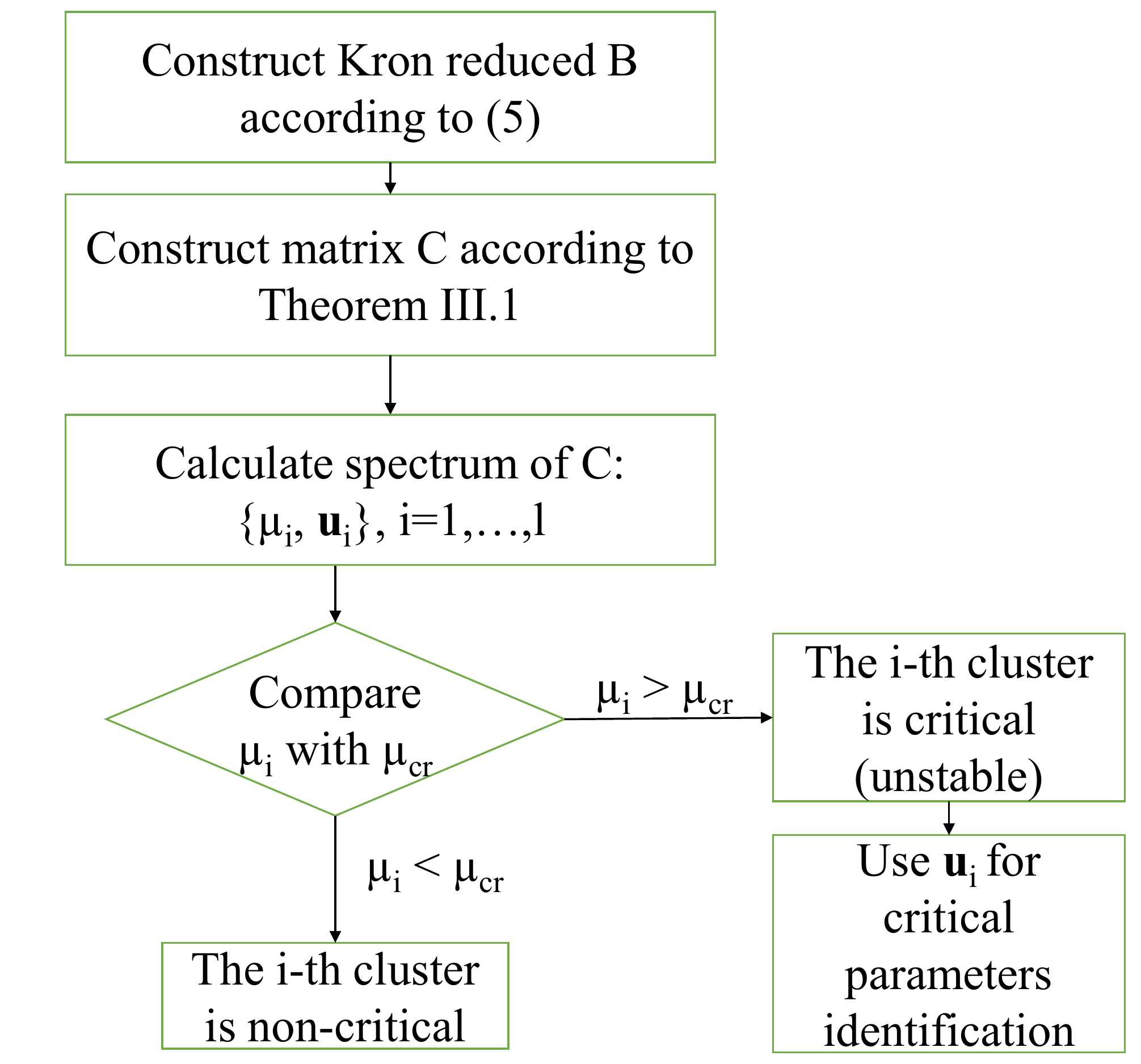}
    \caption{Critical cluster identification}
    \label{fig:critical_cluster_flowchart}
\end{figure}

Figure \ref{fig:mu_cr} gives a plot for $\mu_{cr}$ as a function of $\rho$ and $k$. One can notice, that the minimum boundary for $\mu_{cr} \geq 200$ provides a good estimation for a wide class of grids (unless the value of $\rho$ is exceptionally low). We also note, there is a significant stability margin increase for the values $k$ in the range $10$ to $50$ in terms of $\mu$. This region of $k$ is not very important for practical applications, but the results is fully consistent with the previous findings from \cite{Vorobev2016}.

\begin{figure}
    \centering
    \includegraphics[width= 0.4\textwidth]{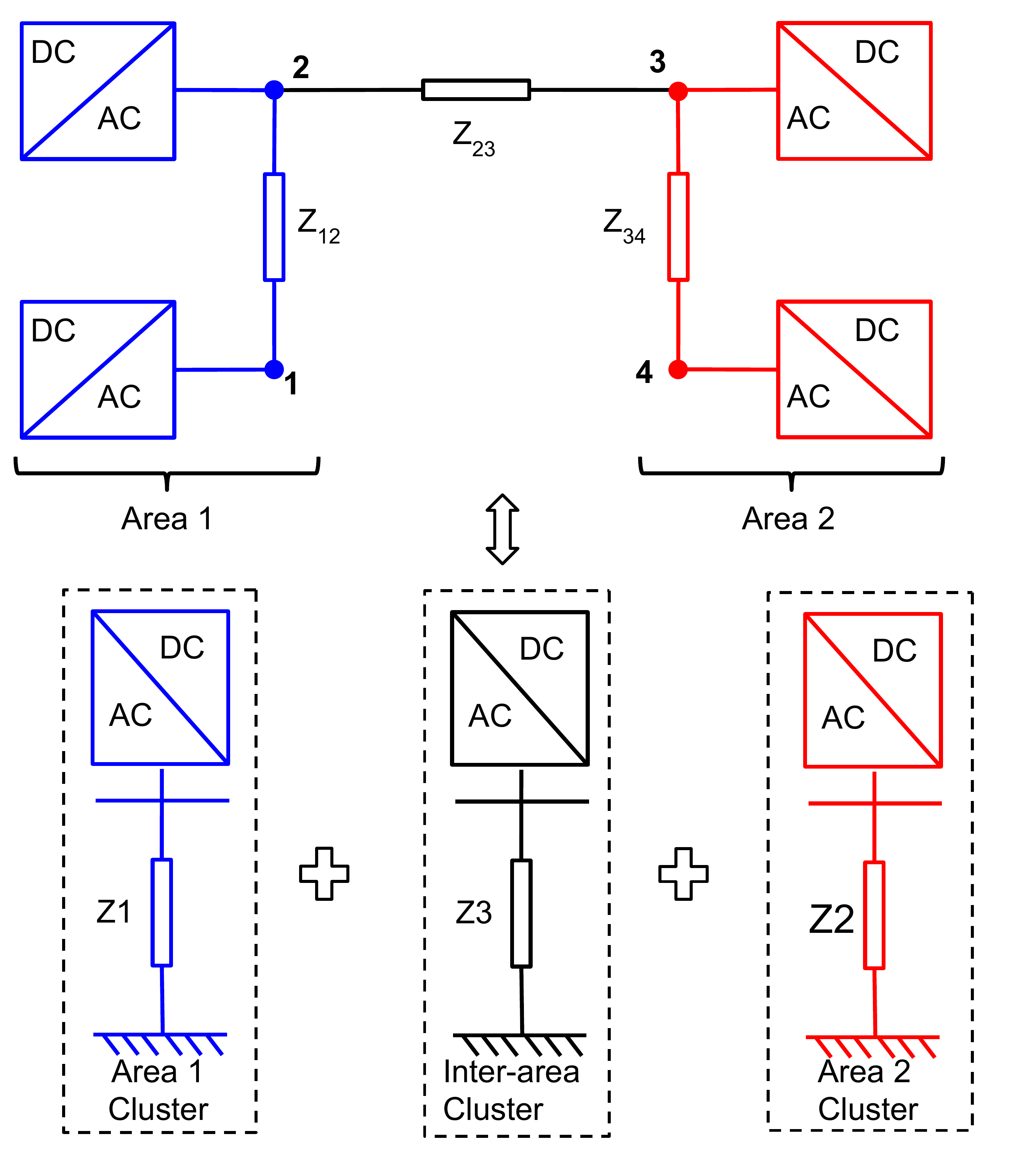}
    \caption{The two-bus representation}
    \label{fig:kundur_mmg}
\end{figure}

%\textcolor{blue}{The colors for Fid. \ref{fig:mu_cr2} and \ref{fig:mu_cr} are not consistent}

% \begin{figure}
%     \centering
%     \includegraphics[width=0.4\textwidth]{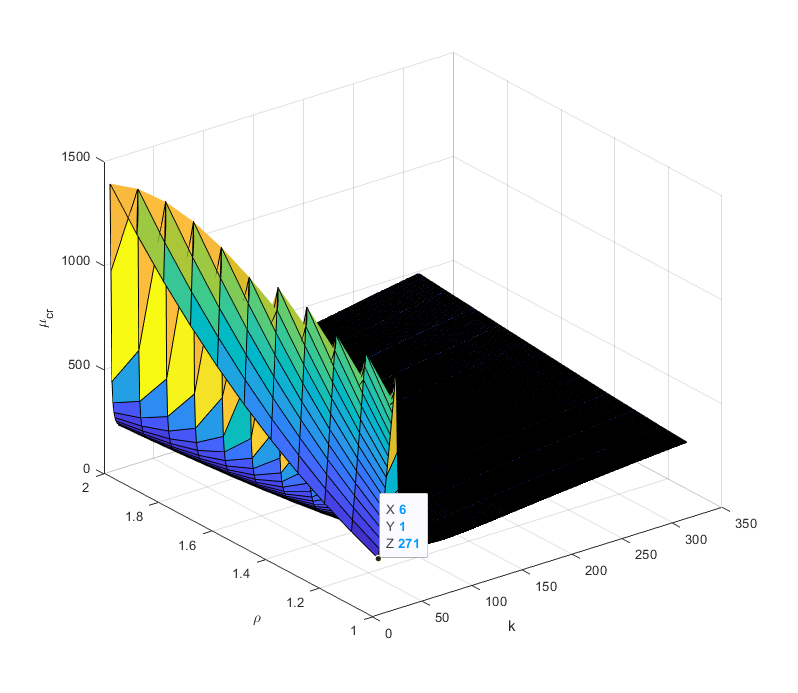}
%     \caption{Critical $\mu$}
%     \label{fig:mu_cr2}
% \end{figure}

\begin{figure}
    \centering
    \includegraphics[width=0.45\textwidth]{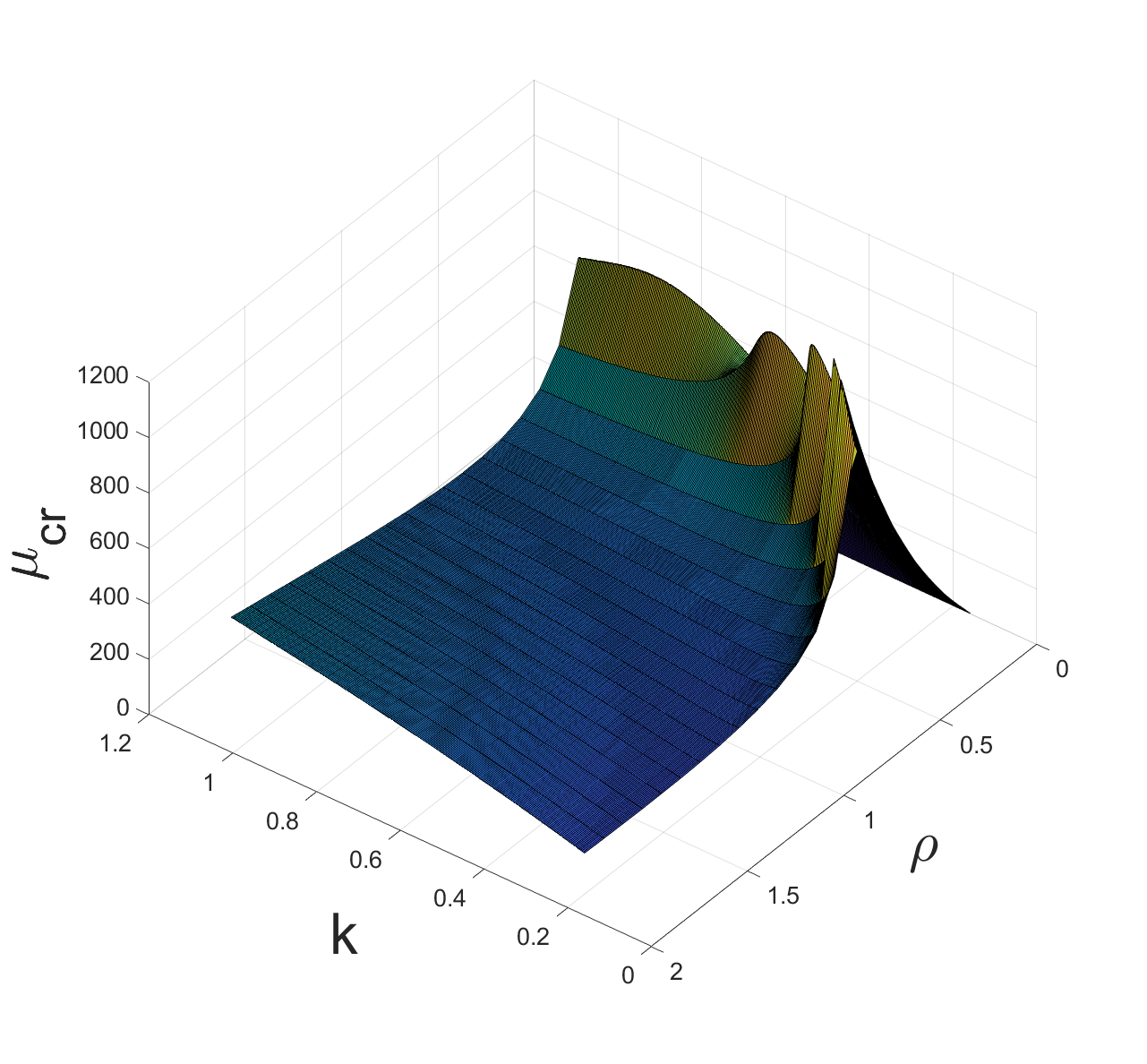}
    \caption{Critical value of $\mu$ for range of $\rho$ and $k$ values.}
    \label{fig:mu_cr}
\end{figure}

%Test cases: 1) Spitting the system into modal two-bus subsystems -> leads to two or more critical clusters. 

% \section{Applications}

% \begin{itemize}
%     \item To provide stability boundaries for $y_0$ in case of the interconnection of two microgrids;\\
%     \item To provide a procedure for critical cluster identification based on a spectrum of $C$;\\
%     \item To find the robust stability boundary in terms of $\mu_{cr}$. In other words, for the class of     systems with the same $R/X$ and $m/n$ give the      critical $\mu_{cr}$ that will be general for each   topology;\\
%     \item 
% \end{itemize}

\section{Numerical Validation}\label{sec:numerical}

This section provides numerical validation of the procedure for critical clusters identification. In addition, we also provide an illustration of the stability enhancement procedure using the identified critical clusters. For our numerical tests, we consider the $4$-bus Kundur system (Fig. \ref{fig:kundur_mmg}), where each bus is supposed to be equipped with a droop controlled inverter.

% \begin{figure}
%     \centering
%     \includegraphics[width=0.45\textwidth]{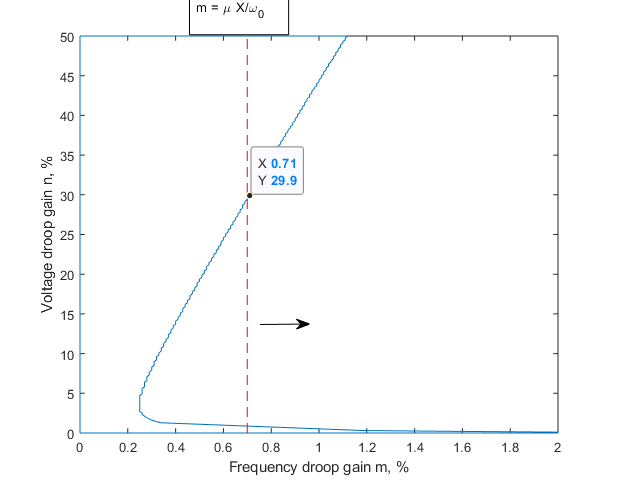}
%     \caption{Stability boundaries for the two-bus equivalent with X=0.01 p.u.}
%     \label{fig:Exp_num1}
% \end{figure}

\subsection{Base test case}

We start from the initial system with the parameter values given in Table \ref{tab:parameters}. For the values of $\rho=1.4$ and $k=1.0$ for our system, one finds from equation \eqref{eq:sym_roots} that $\mu_{cr}=195$. The eigenvalues of the system weighted admittance matrix $C$ for the parameters of Table  are the following (apart from the trivial zero eigenvalue):  $\mu_1=9.68$, $\mu_2=110.19$, $\mu_3=215.23$. These eigenvalues correspond to the three distinct clusters, each corresponding to pairs of neighbouring inverters, as depicted in Fig.\ref{fig:kundur_mmg}. 

We see that one of the eigenvalues - $\mu_3$ is greater than $\mu_{cr}$ for this system. Therefore, the system is unstable. The eigenvector, corresponding to this value is $u_2=[-0.018, 0.056, -0.725, 0.687]^T$. We deduce that the critical cluster is composed of inverters $3$ and $4$, which is also illustrated in Fig. \ref{fig:kundur_mmg}. Therefore, it is the values of droop gains of these inverters and the impedance of the line $3-4$ that have the most effect on the system stability, and one needs to modify them to stabilize the system.

\begin{table}[ht]
    \centering
    \caption{Parameters of the four-inverter system}
    \begin{tabular}{|c|c|c|}
        \hline
        Parameter & Description & Value  \\
        \hline
        $\omega_0$ & Nominal Frequency & $2\pi 50 rads/s$  \\
        $U_b$ & Base Voltage & 230 V\\
        $S_b$ & Base Inverter Rating & 10kVA\\
        $\rho $ & R/X ratio & 1.4 \\ 
        $k$ & Droop gain ratio & 1\\
        $R$ &  Line Resistance & $222.2 m\Omega/km$\\
        $L$ & Line Inductance & $0.51 mH/km$\\
        $m_i$ & Frequency Droop Gain & 1\%\\
        $n_i$ & Voltage Droop Gain & 1\%\\
        $l_{12}$ & Line 1-2 length & 6 km \\
        $l_{23}$ & Line 2-3 length & 30 km \\
        $l_{34}$ & Line 3-4 length & 3 km \\
        \hline
    \end{tabular}
    
    \label{tab:parameters}
\end{table}

\subsection{Line parameters variations}
Firstly, we illustrate the system stabilization by changing the impedance of the line between inverters $3$ and $4$ (according to critical cluster structure). Fig. \ref{fig:mu_l34} shows the dependence of all $\mu$ values on the length of the $3-4$ line $l_{34}$, or, equivalently, on its impedance value. We observe that for the values of the line length smaller than about $6$ km, only the $\mu_2$ value - the critical one is affected. The system gains stability starting from the length of this line around $3.3$ km, which is slightly higher than the starting value for this line. The system remains stable for $l_{34}$ bigger than this value.    

On the other hand, Fig. \ref{fig:mu_l23}, that the variation of the line $2-3$ length, even in a much higher range - up to $50$ km could not stabilize the system as $\mu_3$ stays above the critical value. This is the consequence of the fact that the cluster $3$ remains unstable even for an infinite length of the line $2-3$ when the system splits into two separate areas.

\begin{figure}
    \centering
    \includegraphics[width=0.45\textwidth]{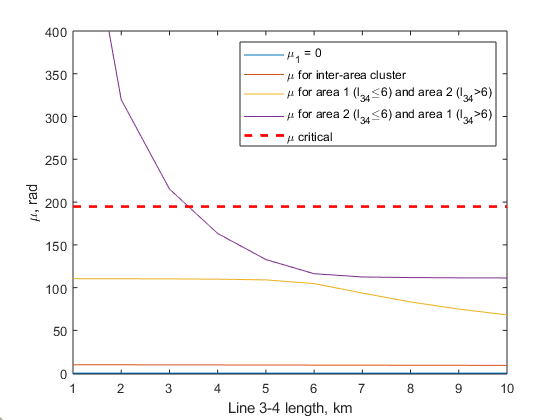}
    \caption{Variation $\mu$ with respect to line 3-4 length, $l_{34}$}
    \label{fig:mu_l34}
\end{figure}

\begin{figure}
    \centering
    \includegraphics[width=0.45\textwidth]{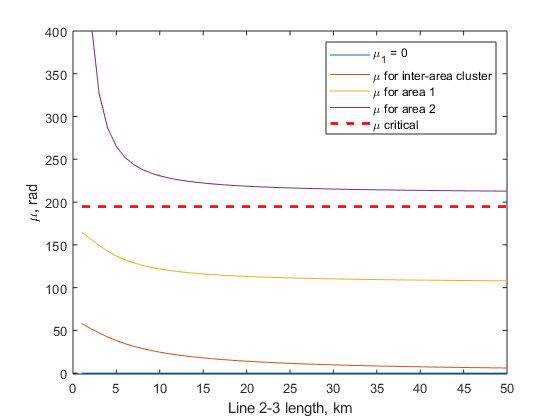}
    \caption{Variation $\mu$ with respect to line 2-3 lentgh, $l_{23}$}
    \label{fig:mu_l23}
\end{figure}

\subsection{Droop gains variations}

Secondly, we show that the system can also be stabilized by changing the droop gains of inverters $3$ and/or $4$. Fig. \ref{fig:m1} shows the dependence of the eigenvalues $\mu$ on the $M_1$ - frequency droop of inverter $1$. We see that across all the region of the values of $M_1$, the value of $\mu_3$ stays above $\mu_{cr}$. Therefore, it is impossible to stabilize the system by any variations of the droop gains of inverter $1$. This is in agreement with the fact that the cluster, responsible for instability, is composed of inverters $3$ and $4$. Moreover, with the increase of  $M_1$ above a certain threshold ($\sim 3\%$), another eigenvalue, namely, $\mu_1$ crosses the critical value, so that the system now has two critical clusters, making it unstable.

The situation is different, with the variation of the inverter $3$ droop gain $M_3$. Fig. \ref{fig:m3} demonstrates, that the value of $\mu_2$ is greatly affected by this variation, so the system can be efficiently stabilized by adjusting (decreasing) the $M_3$ coefficient.

\begin{figure}
    \centering
    \includegraphics[width=0.45\textwidth]{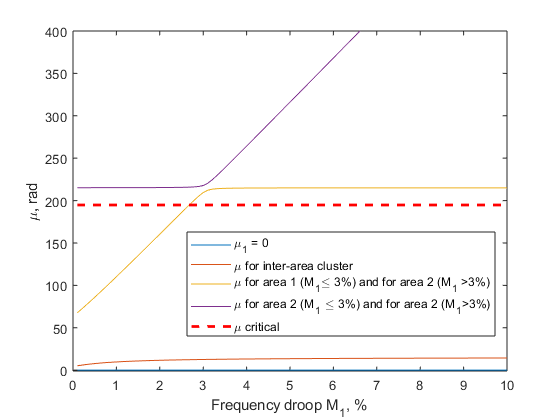}
    \caption{Variation $\mu$ with respect to the first frequency droop gain, $M_1$}
    \label{fig:m1}
\end{figure}

\begin{figure}
    \centering
    \includegraphics[width=0.45\textwidth]{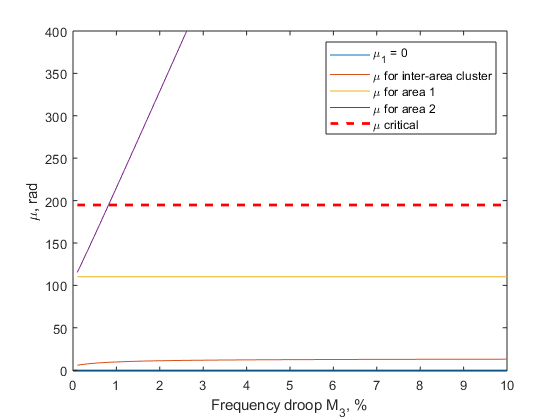}
    \caption{Variation $\mu$ with respect to the second frequency droop gain, $M_3$}
    \label{fig:m3}
\end{figure}

\iffalse 

g the impedance $Z_{34}$ higher, the system could be stabilized. Also, only $\mu$ associated with area 2 is sensitive to the changes in $l_{34}$, while others $\mu$ do not vary significantly. On the other hand, the variation of line 2-3 in Fig. \ref{fig:mu_l34} in the greater range up to 50 km could not stabilize the system as the highest $\mu$ asymptotically limits to about 210 rad because area 2 of even decoupled system with no connection 2-3 is unstable. This comparison indicates that a critical cluster is formed by line 3-4.

Concurrently, the variation of $m_3$ in Fig. \ref{fig:m3} demonstrated the stable behaviour with $m_3 < 1\%$. Besides, on the both plots Fig. \ref{fig:m1} and \ref{fig:m3} the dependence of $\mu$ on $m_1, m_3$ is very close to be linear.

demonstrate the dependence of $\mu$ on droop gains $M$. Droop gains $m_1$ and $m_3$ from different areas are considered. Fig. \ref{fig:m1} shows the $\mu$ variations according to the changes of the first droop gain $m_1$. One could observe that the critical $\mu$ (purple line in Fig. \ref{fig:m1}) is always unstable even with small value $m_1 \approx 0$ and limits to almost the same values at 0 as in Fig \ref{fig:mu_l23}. Concurrently, the variation of $m_3$ in Fig. \ref{fig:m3} demonstrated the stable behaviour with $m_3 < 1\%$. Besides, on the both plots Fig. \ref{fig:m1} and \ref{fig:m3} the dependence of $\mu$ on $m_1, m_3$ is very close to be linear.

\fi

\section{Conclusions and Future work}\label{sec:conclusion}

We have developed a method for stability assessment of inverter-based microgrids by means of representing it as a set of $2$-bus equivalent clusters, which can be arranged in order of their proximity to instability. The method is based on the analysis of the spectrum of a special weighted admittance matrix of the network and determining the eigenvalues, that lie above a certain threshold. Our findings are consistent with the number of previous results on account of the fact that groups of tightly connected neighboring inverters typically cause instabilities in inverter-based microgrids. This, so-called, critical clusters, are identified by the eigenvectors of the weighted admittancen matrix. Therefore, our method allows us to assess stability and determine the most critical parts of the system in a single step. 

We have validated and illustrated our method on a particular system of $4$ inverters and demonstrated, that variation only of the very specific parameters can stabilize the system or enhance its stability margin. The developed method has an excellent practical perspective as a method for "weak spots" identification in nearly built microgrids, or microgrids with planned reconfiguration. Further research will focus on deriving closed-form expressions for stability enhancement rules, considering the possible worst-case scenarios in $R/X$ ratios and/or frequency/voltage droop ratios, that can potentially allow formulating robust stability assessment methods.

\bibliographystyle{IEEEtran}

\bibliography{bibl}  

\end{document}